\newtheorem{theorem}{Theorem}
\newtheorem{corollary}[theorem]{Corollary}
\newcommand{\maxline}[1]{\overline{#1}}
\newcommand{\doneupdate}[1]{#1}
\newcommand{\todolater}[1]{}
\newcommand{\fprint}{f}
\newcommand{\quot  }{\textit{quot}}
\newcommand{\rem   }{\textit{rem}}
\newcommand{\readlock }{\texttt{010}}
\newcommand{\writelock}{\texttt{110}}
\newcommand{\fprate}{p^{+}}
\newcommand{\fpratei}[1]{p^+_{#1}}
\newcommand{\fpratedach}{\maxline{p}^{+}}
\newcommand{\fprateidach}[1]{\maxline{p}^+_{#1}}
\newcommand{\filldeg}{\delta}
\definecolor{lpcol}{HTML}{0D5191}
\definecolor{llqfcol}{HTML}{4FB743}
\definecolor{elqfcol}{HTML}{F28123}
\definecolor{cqfcol}{HTML}{F54080}
\definecolor{bloomcol}{HTML}{950952}
\newcommand{\sqrfull}{{\color{lpcol}%
\scalebox{0.95}{\ding{110}}%
}}
\newcommand{\cirfull}{{\color{llqfcol}%
\ding{108}%
}}
\newcommand{\diafull}{{\color{elqfcol}%
\scalebox{0.9}{\rotatebox[origin=c]{45}{\ding{110}}}%
}}
\newcommand{\cqfmarker}{{\color{cqfcol}%
\scalebox{1.15}{\rotatebox[origin=c]{45}{\ding{58}}}%
}}
\newcommand{\trifull}{{\color{bloomcol}%
\ding{115}%
}}
\newenvironment{code}{\vspace{-0.5em}\noindent%
\begin{tabbing}%
\hspace{1.5em}\=\hspace{1.5em}\=\hspace{1.5em}\=\hspace{1.5em}\=\hspace{1.5em}\=\hspace{1.5em}\=\hspace{1.5em}\=%
\hspace{1.5em}\=\hspace{1.5em}\=\hspace{1.5em}\=\hspace{1.5em}\=\hspace{1.5em}\=%
\kill}{\end{tabbing}\vspace{-0.5em}}
\definecolor{ccommentcol}{HTML}{A0A0A0}
\definecolor{cemphcol}{HTML}{4047A0}
\definecolor{cvarcol}{HTML}{000000}
\newcommand{\ccomment}[1]{{\color{ccommentcol} \tt// #1}}
\newcommand{\cvar}[1]{{\textit{\color{cvarcol} #1}}}
\newcommand{\crunvar}{\cvar{it}}
\newcommand{\cscanl}{{\bf scan left from}}
\newcommand{\cscanr}{{\bf scan right from}}
\newcommand{\cif}{{\bf if}}
\newcommand{\cand}{{\bf and}}
\newcommand{\cthen}{{\bf then}}
\newcommand{\cbreak}{{\bf break}}
\newcommand{\ccontinue}{{\bf continue}}
\newcommand{\cwait}{{\bf wait}}
\newcommand{\clock}{{\bf lock}}
\newcommand{\cunlock}{{\bf unlock}}
\newcommand{\creturn}{{\bf return}}
\newcommand{\cemph}[1]{{\sf \color{cemphcol} #1}}
\begin{document}

\title{Concurrent Expandable AMQs on the Basis of Quotient Filters}
\author{Tobias Maier, Peter Sanders, Robert Williger}
\institute{Karlsruhe Institute of Technology\\\email{{\{t.maier, sanders\}@kit.edu}}}

\date{}

\maketitle


\begin{abstract}
  A quotient filter is a cache efficient Approximate Membership Query
  (AMQ) data structure.  Depending on the fill degree of the filter
  most insertions and queries only need to access one or two
  consecutive cache lines.  This makes quotient filters very fast
  compared to the more commonly used Bloom filters that incur multiple
  cache misses depending on the false positive rate.  However,
  concurrent Bloom filters are easy to implement and can be
  implemented lock-free while concurrent quotient filters are not as
  simple.  Usually concurrent quotient filters work by using an
  external array of locks -- each protecting a region of the table.
  Accessing this array incurs one additional cache miss per operation.
  We propose a new locking scheme that has no memory overhead.  Using
  this new locking scheme we achieve \doneupdate{1.8} times higher
  speedups than with the common external locking scheme.

  Another advantage of quotient filters over Bloom filters is
  that a quotient filter can change its size when it is becoming full.
  We implement this growing technique for our concurrent quotient
  filters and adapt it in a way that allows unbounded growing while
  keeping a bounded false positive rate.  We call the resulting data
  structure a fully expandable quotient filter.  Its design is similar to
  scalable Bloom filters, but we exploit some concepts inherent to
  quotient filters to improve the space efficiency and the
  query speed.

  Additionally, we propose several quotient filter variants
  that are aimed to reduce the number of status bits (2-status-bit
  variant) or to simplify concurrent implementations (linear probing
  quotient filter).  The linear probing quotient filter even leads to
  a lock-free concurrent filter implementation.  This is especially
  interesting, since we show that any lock-free implementation of
  another common quotient filter variant would incur significant
  overheads in the form of additional data fields or multiple passes
  over the accessed data.
\end{abstract}

\section{Introduction}
\label{sec:intro}


\emph{Approximate Membership Query} (AMQ) data structures offer a
simple interface to represent sets.  Elements can be inserted,
queried, and depending on the use case elements can also be removed.  A
query returns true if the element was previously inserted.  Querying
an element might return true even if an element was not inserted.  We
call this a false positive.  The probability that a non-inserted
element leads to a false positive is called the false positive rate of
the filter. It can usually be chosen at the time of constructing the
data structure.  AMQ data structures have two main advantages over
other set representations they are fast and space efficient.  Similar
to hash tables most AMQ data structures have to be initialized knowing
a bound to their final size.  This can be a problem for their space
efficiency if no accurate bound is known.

AMQ data structures have become an integral part of many complex data
structures or data base applications.  Their small size and fast
accesses can be used to sketch large, slow data sets.  The AMQ filter
is used before accessing the data base to check whether a slow lookup
is actually necessary.  Some recent uses of AMQs include network
analysis~\cite{qf_network} and bio informatics~\cite{qf_bio1}.  These
are two of the areas with the largest data sets available today, but
given the current big data revolution we expect more and more research
areas will have a need for the space efficiency and speedup potential
of AMQs.  The most common AMQ data structure in practice is still a
Bloom filter even though a quotient filter would usually be
significantly faster.  One reason for this might be that concurrent
Bloom filters are easy to implement -- even lock-free and well scaling
implementations.  This is important because well scaling
implementations have become critical in today's multi-processor
scenarios since single core performance stagnates and efficient
multi-core computation has become a necessity to handle growing data
sets.

We present a technique for concurrent quotient filters that uses local
locking inside the table -- without an external array of locks.
Instead of traditional locks we use the per-slot status bits that are
inherent in quotient filters to lock local ranges of slots.  As long
as there is no contention on a single area of the table there is very
little overhead to this locking technique because the locks are placed
in slots that would have been accessed anyways.
An interesting advantage of quotient filters over Bloom filters is
that the capacity of a quotient filter can be increased without access
to the original elements.  However, because the false positive rate
grows linearly with the number of inserted elements, this is only useful for
a limited total growing factor.  We implement this growing technique
for our concurrent quotient filters and extend it to allow large
growing factors at a strictly bounded false positive rate. These
\emph{fully expandable quotient filters} combine growing quotient filters
with the multi level approach of scalable Bloom
filters~\cite{scalable_bloom}.

\subsection{Related Work}
\label{sec:intro_related}
Since quotient filters were first described in
2012~\cite{dont_thrash} there has been a steady stream of
improvements.  For example Pandey et al.~\cite{counting_qf} have
shown how to reduce the memory overhead of quotient filters by using
rank-select data structures.  This also improves the performance when
the table becomes full. Additionally, they show an idea that saves
memory when insertions are skewed (some elements are inserted many
times).  They also mention the possibility for concurrent access using
an external array of locks (see Section~\ref{sec:exp} for results).
Recently, there was also a GPU-based implementation of quotient
filters~\cite{gpu_qf}.  This is another indicator that there is a lot
of interest in concurrent AMQs even in these highly parallel
scenarios.

Quotient filters are not the only AMQ data structures that have
received attention recently. Cuckoo
filters~\cite{cuckoo_filter,adaptive_cuckoo_filter} and very recently
Morton filters~\cite{morton_filter} (based on cuckoo filters) are two
other examples of AMQ data structures.  A cuckoo filter can be filled
more densely than a quotient filter, thus, it can be more space
efficient.  Similar to cuckoo hash tables each access to a cuckoo
filter needs to access multiple buckets -- leading to a worse
performance when the table is not densely filled (a quotient filter
only needs one access). It is also difficult to implement concurrent
cuckoo filters (similar to concurrent cuckoo hash
tables~\cite{concurrent_cuckoo, lockfree_cuckoo}); in particular,
when additional memory overhead is unwanted, e.g., per-bucket locks.

A scalable Bloom filter~\cite{scalable_bloom} allows unbounded
growing.  This works by adding additional levels of Bloom filters once
one level becomes full.  Each new filter is initialized with a higher
capacity and a lower false positive rate than the last.  The query
time is dependent on the number of times the filter has grown (usually
logarithmic in the number of insertions).  Additionally, later filters
need more and more hash functions to achieve their false positive rate
making them progressively slower.  In Section~\ref{sec:dyn_quotient},
we show a similar technique for fully expandable quotient filters that
mitigates some of these problems.

\subsection{Overview}
\label{sec:intro_overview}
In Section~\ref{sec:prelim} we introduce the terminology and explain
basic quotient filters as well as our variations in a sequential
setting.  Section~\ref{sec:conc} then describes our approach to
concurrent quotient filters including the two variants: the lock-free
linear probing quotient filter and the locally locked quotient filter.
The dynamically growing quotient filter variant is described in
Section~\ref{sec:dyn_quotient}.  All presented data structures are
evaluated in Section~\ref{sec:exp}.  Afterwards, we draw our
conclusions in Section~\ref{sec:dis}.

\section{Sequential Quotient Filter}
\label{sec:prelim}
In this section we describe the basic sequential quotient filter as
well as some variants to the main data structure.  Throughout this
paper, we use $m$ for the number of slots in our data structure and
$n$ for the number of inserted elements.  The fill degree of any given
table is denoted by $\filldeg = n/m$.  Additionally, we use $\fprate$
to denote the probability of a false positive query.

\subsection{Basic Quotient Filter}
\label{sec:prelim_quotient}
Quotient filters are approximate membership query data structures that
were first described by Bender et al.~\cite{dont_thrash} and build on
an idea for space efficient hashing originally described by
Cleary~\cite{Cle84}.  Quotient filters replace possibly large elements
by \emph{fingerprints}.  The fingerprint $\fprint(x)$ of an element
$x$ is a number in a predefined range
$\fprint:\ x\mapsto \{0,...,2^k -1\}$ (binary representation with
exactly $k$ digits).
We commonly get a fingerprint of $x$ by taking the $k$ bottommost bits
of a hash function value $h(x)$ (using a common hash function like
xxHash~\cite{xxhash}).

A quotient filter stores the fingerprints of all inserted elements.
When executing a query for an element $x$, the filter returns
\texttt{true} if the fingerprint $\fprint(x)$ was previously inserted
and \texttt{false} otherwise.  Thus, a query looking for an element
that was inserted always returns \texttt{true}.  A false positive
occurs when $x$ was not inserted, but its fingerprint $\fprint(x)$
matches that of a previously inserted element.  Given a fully random
fingerprint function,
the probability of two fingerprints being the same is $2^{-k}$.
Therefore, the probability of a false positive is bounded by
$n\cdot 2^{-k}$ where $n$ is the number of stored fingerprints.

To achieve expected constant query times as well as to save memory,
fingerprints are stored in a special data structure that is similar to
a hash table with open addressing (specifically it is similar to
robinhood hashing~\cite{robin}).  During this process, the fingerprint
of an element $x$ is split into two parts: the topmost $q$ bits called
the quotient $\quot(x)$ and the bottommost $r$ bits called the
remainder $\rem(x)$ ($q+r=k$).  The quotient is used to address a
table that consisting of $m=2^q$ memory slots of $r+3$ bits (it can
store one remainder and three additional status bits).  The quotient
of each element is only stored implicitly by the position of the
element in the table. The remainder is stored explicitly within one
slot of the table.  Similar to many hashing techniques, we try to
store each element in one designated slot which we call its canonical
slot (index $\quot(x)$).  With the help of the status bits we can
reconstruct the quotient of each element even if it is not stored in
its canonical slot.

\begin{table*}[htb]
\noindent
\begin{minipage}[b]{0.32\textwidth}
  \begin{tabular}{l|l}
    \texttt{1**}& this slot has a run\\
    \texttt{000}& empty slot\\
    \texttt{100}& cluster start\\
    \texttt{*01}& run start\\
    \texttt{*11}& continuation of a run\\
    \texttt{*10}& -- (not used see~\ref{sec:conc_status_locking})\\
  \end{tabular}
\vspace{2mm}
  \caption{\label{tab:status} Meaning of
    different status bit combinations.}
\end{minipage}\hfill
\begin{minipage}[b]{.66\textwidth}
  \renewcommand\tablename{Figure}
  \setcounter{table}{0}
  \includegraphics[width=\textwidth]{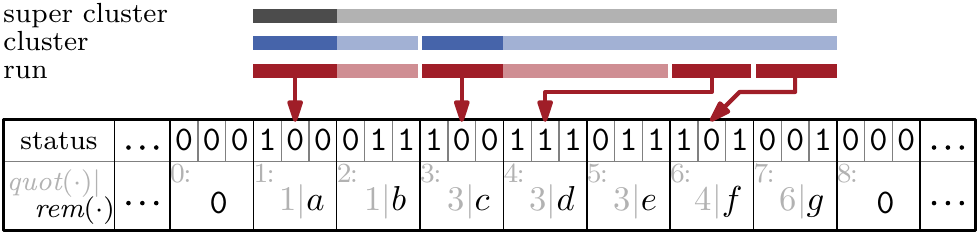}
  \caption{\label{fig:cluster} Section of the table with
    highlighted runs, clusters, and superclusters.  Runs point to
    their canonical slot.}
  \renewcommand\tablename{Table}
  \setcounter{figure}{1}
  \setcounter{table}{1}
\end{minipage}
\end{table*}

The main idea for resolving collisions is to find the next free slot
-- similar to linear probing hash tables -- but to reorder the elements such that they
are sorted by their fingerprints (see Figure~\ref{fig:cluster}).
Elements with the same quotient (the same canonical slot) are stored
in consecutive slots, we call them a run.  The canonical run of an
element is the run associated with its canonical slot.  The canonical
run of an element does not necessarily start in its canonical slot.  It
can be shifted by other runs.  If a run starts in its canonical slot
we say that it starts a cluster that contains all shifted runs after
it.  Multiple clusters that have no free slots between them form a
supercluster.
We use the 3 status bits that are part of each slot to distinguish
between runs, clusters, and empty slots. For this we store the
following information about the contents of the slot (further
described in Table~\ref{tab:status}): Were elements hashed to this
slot (is its run non-empty)? Does the element in this slot belong to
the same run as the previous entry (used as a run-delimiter signaling
where a new run starts)?  Does the element in this slot belong to the
same cluster as the previous entry (is it shifted)?

During the query for an element $x$, all remainders stored in $x$'s
canonical run have to be compared to $\rem(x)$.  First we look at the
first status bit of the canonical slot.  If this status bit is unset
there is no run for this slot and we can return false.  If there is a
canonical run, we use the status bits to find it by iterating to the
left until the start of a cluster (third status bit $= 0$).  From
there, we move to the right counting the number of non-empty runs to
the left of the canonical run (slots with first status bit $=1$ left
of $\quot(x)$) and the number of run starts (slots with second status
bit $=0$).  This way, we can easily find the correct run and compare
the appropriate remainders.

An insert operation for element $x$ proceeds similar to a query, until
it either finds a free slot or a slot that contains a fingerprint that
is $\geq\fprint(x)$.  The current slot is replaced with $\rem(x)$
shifting the following slots to the right (updating the status bits
appropriately).

When the table fills up, operations become slow because the average
cluster length increases.  When the table is full, no more insertions
are possible.  Instead, quotient filters can be migrated into a larger
table -- increasing the overall capacity.  To do this a new table is
allocated with twice the size of the original table and one less bit
per remainder.  Addressing the new table demands an additional
quotient bit, since it is twice the size.  This issue is solved by
moving the uppermost bit from the remainder into the quotient
($q' = q+1$ and $r' = r-1$).  The capacity exchange itself does not
impact the false positive rate of the filter (fingerprint size and
number of elements remain the same).  But the false positive rate
still increases linearly with the number of insertions
($\fprate{} = n\cdot 2^{-k}$).  Therefore, the false positive rate
doubles when the table is filled again.  We call this migration
technique \emph{bounded growing}, because to guarantee reasonable
false positive rates this method of growing should only be used a
bounded number of times.

\subsection{Variants}
We give a short explanation of counting quotient filters using
rank-and-select data structures~\cite{counting_qf}. Additionally, we
introduce two previously unpublished quotient filter variants: the
2-status-bit quotient filter and the linear probing quotient filter.

\paragraph*{(Counting) Quotient Filter Using Rank-and-Select.}
\label{sec:var_rank_select}
This quotient filter variant proposed by Pandey et
al.~\cite{counting_qf} introduces two new features: the number of
necessary status bits per slot is reduced to 2.125 by using a
bitvector that supports rank-and-select queries and the memory
footprint of multisets is reduced by allowing counters to be stored in
the table -- together with remainders.

This quotient filter variant uses two status bits per slot the
\emph{occupied-bit} (first status bit in the description above) and
the \emph{run-end-bit} (similar to the second bit).  These are stored
in two bitvectors that are split into 64~bit blocks (using 8
additional bits per block).  To find a run, the rank of its
occupied-bit is computed and the slot with the matching run-end-bit is
selected.

To reduce memory impact of repeated elements Pandey et al. use the
fact that remainders are stored in increasing order.  Whenever an
element is inserted for a second time instead of doubling its
remainder they propose to store a counter in the slot after its
remainder this counter can be recognized as a counter because it does
not fit into the increasing order (the counter starts at zero).  As
long as it remains smaller than the remainder this counter can count
the number of occurrences of this remainder.  Once the counter becomes
too large another counter is inserted.

\paragraph*{Two-Status-Bit Quotient Filter.}
\label{sec:var_2bit}
Pandey et al.~\cite{counting_qf} already proposed a 2 status bit
variant of their counting filter.  Their implementation however has
average query and insertion times in $\Theta(n)$.  The goal of our
2-status bit variant is to achieve running times close to
$O(\textit{supercluster length})$.  To achieve this, we change the
definition of the fingerprint (only for this variant) such that no
remainder can be zero
$\fprint{}': x\mapsto \{0, ..., 2^q-1\}\times\{1, ..., 2^r-1\}$.
Obtaining a non-zero remainder can easily be achieved by rehashing an
element with different hash functions until the remainder is non-zero.
This change to the fingerprint only has a minor impact on the false
positive rate of the quotient filter ($n/m\cdot (2^r-1)$ instead of
$n/m\cdot 2^r$).

Given this change to the fingerprint we can easily distinguish empty
slots from filled ones.  Each slot uses two status bits the
\emph{occupied-bit} (first status bit in the description above) and
the \emph{new-run-bit} (run-delimiter).  Using these status bits we
can find a particular run by going to the left until we find a free
slot and then counting the number of occupied- and new-run-bits while
moving right from there (\emph{Note}: a cluster start within a larger
supercluster cannot be recognized when moving left).

\paragraph*{Linear Probing Quotient Filter.}
\label{sec:var_lpqf}
This quotient filter is a hybrid between a linear
probing hash table and a quotient filter.  It uses no reordering of
stored remainders and no status bits.  To offset the removal of status
bits we add three additional bits to the remainder (leading to a
longer fingerprint).  Similar to the two-status-bit quotient
filter above we ensure no element $x$ has $\rem(x) = 0$ (by
adapting the fingerprint function).

During the insertion of an element $x$ the remainder $\rem(x)$ is
stored in the first empty slot after its canonical slot.  Without
status bits and reordering it is impossible to reconstruct the
fingerprint of each inserted element. Therefore, a query looking for
an element $x$ compares its remainder $\rem(x)$ with every remainder
stored between $x$'s canonical slot and the next empty slot.  There
are potentially more remainders that are compared than during the same
operation on a normal quotient filter.  The increased remainder length
however reduces the chance of a single comparison to lead to a false
positive by a factor of 8. Therefore, as long as the average number of
compared remainders is less than 8 times more than before, the false
positive remains the same or improves.  In our tests, we found this to
be true for fill degrees up to $\approx 70\%$.  The number of compared
remainders corresponds to the number of slots probed by a linear
probing hash table, this gives the bound below.

\begin{corollary}[$\fprate$ linear probing q.f.]
  The false positive rate of a linear probing quotient filter with
  $\filldeg = n/m$ (bound due to Knuth~\cite{knuth}
  chapter 6.4) is\\

  \vspace{-6mm}\footnotesize \[\fprate{} = \frac{E[\#\textit{comparisons}]}{2^{r+3} -1} =
      \frac12\left(1+\frac{1}{(1-\filldeg)^2}\right)\cdot\frac{1}{2^{r+3} -1}\]
\end{corollary}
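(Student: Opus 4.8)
The plan is to factor $\fprate$ into the probability that one fixed remainder comparison produces a spurious match and the expected number of comparisons a query performs, and then to read off the latter from the known analysis of linear probing.

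First I would fix a query element $x$ that was never inserted. Since the (adapted) fingerprint function makes $\rem(x)$ uniform over the $2^{r+3}-1$ admissible nonzero $(r+3)$-bit values and independent of the remainders currently in the table, a single comparison of $\rem(x)$ against a stored remainder yields equality with probability exactly $1/(2^{r+3}-1)$. Let $C$ denote the random number of comparisons performed by the query for $x$, that is, the number of occupied slots scanned from the canonical slot $\quot(x)$ up to (but not including) the first empty slot. Conditioning on $C=c$ and applying the union bound over these $c$ comparisons gives $\fprate \le E[C]/(2^{r+3}-1)$; since distinct remainders inside a scanned block collide with $\rem(x)$ on disjoint events, the bound is tight up to the lower-order probability that two stored remainders in the block coincide, which is negligible when $2^{r+3}$ is large. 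This is the first displayed identity.

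It remains to evaluate $E[C]$. The decisive observation is structural: the linear probing quotient filter performs no reordering and always deposits a remainder in the first empty slot at or after the canonical slot, so the pattern of occupied and empty slots it produces is distributed exactly like the occupancy of an ordinary linear probing hash table whose keys have hash values $\quot(\cdot)$. The slots examined by a query for a non-inserted $x$ are then precisely the slots probed by an unsuccessful linear probing search beginning at $\quot(x)$, so $E[C]$ equals the expected number of probes of an unsuccessful search at load factor $\alpha=\filldeg=n/m$. Knuth's analysis of linear probing (chapter~6.4 of~\cite{knuth}) gives this as $\frac12\left(1+\frac{1}{(1-\filldeg)^2}\right)$, and substituting into the identity from the previous paragraph yields the claim.

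I expect the first step, not the second, to be the delicate one: the clean equality $\fprate = E[C]/(2^{r+3}-1)$ is really a first-order estimate, and making it rigorous means controlling (i) the deviation of $\rem(x)$ from perfect uniformity, (ii) the chance that two remainders within one scanned block coincide (which would make the union bound strict), and (iii) the fact that Knuth's expression for $E[C]$ is itself asymptotic as $m\to\infty$ with $\filldeg$ fixed. All three corrections are of lower order, so I would present the formula in the same approximate spirit as the false-positive-rate formulas for basic quotient filters and Bloom filters rather than as an exact identity.
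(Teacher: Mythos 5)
Your proposal matches the paper's own (informal) derivation: the paper likewise obtains the formula by multiplying the per-comparison collision probability $1/(2^{r+3}-1)$ for a uniform nonzero remainder by the expected number of compared remainders, which it identifies with the expected probe count of an unsuccessful linear probing search and takes from Knuth's analysis. Your added caveats about the identity being a first-order estimate are consistent with the approximate spirit in which the paper states the bound, so there is no substantive difference in approach.
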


It should be noted that linear probing quotient filters cannot support
deletions and the bounded growing technique.

\section{Concurrent Quotient Filter}
\label{sec:conc}
From a theoretical point of view using an external array of locks
should lead to a good concurrent quotient filter.  As long as the
number of locks is large enough to reduce contention ($>p^2$) and the
number of slots per lock is large enough to ensure that clusters don't
span multiple locking regions.  But there are multiple issues that
complicate the implementation in practice.  In this section we
describe the difficulties of implementing an efficient concurrent
quotient filter, attributes of a successful concurrent quotient
filter, and the concurrent variants we implemented.

\paragraph*{Problems with Concurrent Quotient Filters.}
The first consideration with every concurrent data structure should be
correctness.  The biggest problem with implementing a concurrent
quotient filter is that multiple data entries have to be read in a
consistent state for a query to succeed.  All commonly known variants
(except the linear probing quotient filter) use at least two status
bits per slot, i.e., the \emph{occupied-bit} and some kind of
\emph{run-delimiter-bit} (the run-delimiter might take different
forms).  The remainders and the run-delimiter-bit that belong to
one slot cannot reliably be stored close to their slot and its
occupied-bit. Therefore, the occupied-bit and the run-delimiter-bit
cannot be updated with one atomic operation.


For this reason, implementing a lock-free quotient filter that uses
status bits would cause significant overheads (i.e. additional memory
or multiple passes over the accessed data).  The problem is that
reading parts of multiple different but individually consistent states
of the quotient filter leads to an inconsistent perceived state.  To
show this we assume that insertions happen atomically and transfer the
table from one consistent state directly into the new final consistent
state.  During a query we have to find the canonical run
and scan the remainders within this run.  To find the canonical run we
have to compute its rank among non-empty runs using the occupied-bits
(either locally by iterating over slots or globally using a
rank-select-data structure) and then find the run with the same rank
using the run-delimiter-bits.  There is no way to guarantee that the
overall state has not changed between finding the rank of the
occupied-bit bit and finding the appropriate run-delimiter.
Specifically we might find a new run that was created by an insertion
that shifted the actual canonical run.  Similar things can happen when
accessing the remainders especially when remainders are not stored
interleaved with their corresponding status bits.  Some of these
issues can be fixed using multiple passes over the data but ABA
problems might arise in particular when deletions are possible.

To avoid these problems while maintaining performance we have two
options: either comparing all following remainders and removing the
status bits all-together (i.e. the linear probing quotient filter) or
by using locks that protect the relevant range of the table.



\paragraph*{Goals for Concurrent Quotient Filters.}
Besides the correctness there are two main goals for any concurrent
data structure: scalability and overall performance.
One major performance advantage of quotient filters over other AMQ
data structures is their cache efficiency.  Any successful concurrent
quotient filter variant should attempt to preserve this advantage.
Especially insertions into short or even empty clusters and
(unsuccessful) queries with an empty canonical run lead to
operations with only one or two accessed cache lines and at most one
write operation.  Any variant using external locks has an immediate
disadvantage here.

\paragraph*{Concurrent Slot Storage.}
\label{sec:conc_compact}
Whenever data is accessed by multiple threads concurrently, we have to
think about the atomicity of operations.  To be able to store all data
connected to one slot together we alternate between status bits and
remainders.  The slots of a quotient filter can have arbitrary sizes
that do not conform to the size of a standard atomic data type.
Therefore, we have to take into account the memory that is wasted by
using these basic data types inefficiently.  It would be common to
just use the smallest data type that can hold both the remainder and
status bits and waste any excess memory.  But quotient filters are
designed to be space efficient.  Therefore, we need a different
method.

We use the largest common atomic data type (64~bit) and pack as many
slots as possible into one data element -- we call this packed
construct a group (of slots).  This way, the waste of multiple slots
accumulates to encompass additional elements.  Furthermore, using this
technique we can atomically read or write multiple slots at once --
allowing us to update in bulk and even avoid some locking.

An alternative would have been not to waste any memory and store slots
consecutively.  However, this leads to slots that are split over the
borders of the aligned common data types and might even cross the
border between cache lines.  While non-aligned atomic operations are
possible on modern hardware, we decided against this method after a
few preliminary experiments showed how inefficient they performed in
practice.

\subsection{Concurrent Linear Probing Quotient Filter}
\label{sec:conc_lp_filter}
Operations on a linear probing quotient filter (as described in
Section~\ref{sec:var_lpqf}) can be executed concurrently similar to
operations of a linear probing hash table~\cite{growt_topc}. The table
is constructed out of merged atomic slots as described above.  Each
insertion changes the table atomically using a single compare and swap
instruction.  This implementation is even lock free, because at least
one competing write operation on any given cell is successful.
Queries find all remainders that were inserted into their canonical
slot, because they are stored between their canonical slot and the
next empty slot and the contents of a slot never change once something is stored within it.


\subsection{Concurrent Quotient Filter with Local Locking}
\label{sec:conc_local_locking}
In this section we introduce an easy way to implement concurrent
quotient filters without any memory overhead and without increasing
the number of cache lines that are accessed during operations.  This
concurrent implementation is based on the basic (3-status-bit)
quotient filter.  The same ideas can also be implemented with the
2-status-bit variant, but there are some problems discussed later in
this section.

\begin{algorithm}[htb]
  \caption{\label{alg:insert}Concurrent Insertion}
  \footnotesize
\begin{code}
  $(\quot{}, \rem{}) \leftarrow \fprint(\textit{key})$\\
  \ccomment{Block A: try a trivial insertion}\\
  $\cvar{table\_section} \leftarrow$ atomically load data around slot $\quot{}$\\
  \cif{} insertion into \cvar{table\_section} is trivial \cthen\+\\
    finish that insertion with a CAS and \creturn\-\\

  \ccomment{Block B: write lock the supercluster}\\
  \cscanr{} \crunvar{} $\leftarrow \quot{}$\+\\
    \cif{} \crunvar{} is \cemph{write locked} \cthen{}\+\\
      \cwait{} until released and \ccontinue\-\\
    \cif{} \crunvar{} is \cemph{empty} \cthen{}\+\\
      \clock{} \crunvar{} with \cemph{write lock} and \cbreak{}\+\\
        \cif{} CAS unsuccessful re-examine \crunvar{}\-\-\-\\

  \ccomment{Block C: read lock the cluster}\\
  \cscanl{} \crunvar{} $\leftarrow \quot{}$\+\\
  \cif{} \crunvar{} is \cemph{read locked} \cthen{}\+\\
    \cwait{} until released and retry this slot\-\\
  \cif{} \crunvar{} is \cemph{cluster start} \cthen{}\+\\
    \clock{} \crunvar{} with \cemph{read lock} and \cbreak{}\+\\
      \cif{} CAS unsuccessful re-examine \crunvar{}\-\-\-\\

  \ccomment{Block D: find the correct run}\\
  $\textit{occ} = 0;\ \ \textit{run} = 0$\\
  \cscanr{} \crunvar{}\+\\
    \cif{} \crunvar{} is \cemph{occupied} \cand{} \crunvar{} $<$ canonical slot \cthen{} \textit{occ}\texttt{++}\\
    \cif{} \crunvar{} is \cemph{run start} \cthen{} \textit{run}\texttt{++}\\
    \cif{} $\textit{occ} = \textit{run}$ \cand{} \crunvar{} $\geq$ canonical slot \cthen{} \cbreak{}\-\\

  \ccomment{Block E: insert into the run and shift}\\
  \cscanr{} \crunvar{}\+\\
    \cif{} \crunvar{} is \cemph{read locked} \cthen{}\+\\
      \cwait{} until released\-\\
    store \rem{} in correct slot\\
    shift content of following slots\\
    (keep groups consistent)\\
    \cbreak{} after overwriting the \cemph{write lock}\-\\
 \cunlock{} the \cemph{read lock}
\end{code}
\end{algorithm}

\begin{algorithm}[htb]
  \caption{\label{alg:contains}Concurrent Query}
  \footnotesize
\begin{code}
  $(\quot{}, \rem{}) \leftarrow \fprint(\textit{key})$\\
  \ccomment{Block F: try trivial query}\\
  $\cvar{table\_section} \leftarrow$ atomically load data around slot $\quot{}$\\
  \cif{} answer can be determined from \cvar{table\_section} \cthen\+\\
    \creturn{} this answer\-\\

  \ccomment{Block G: read lock the cluster}\\
  \cscanl{} \crunvar{} $\leftarrow \quot{}$\+\\
  \cif{} \crunvar{} is \cemph{read locked} \cthen{}\+\\
    \cwait{} until released and retry this slot\-\\
  \cif{} \crunvar{} is \cemph{cluster start} \cthen{}\+\\
    \clock{} \crunvar{} with \cemph{read lock} and \cbreak{}\+\\
      \cif{} CAS unsuccessful re-examine \crunvar{}\-\-\-\\

  \ccomment{Block H: find the correct run}\\
  $\textit{occ} = 0;\ \ \textit{run} = 0$\\
  \cscanr{} \crunvar{}\+\\
    \cif{} \crunvar{} is \cemph{occupied} \cand{} \crunvar{} $<$ canonical slot \cthen{} \textit{occ}\texttt{++}\\
    \cif{} \crunvar{} is \cemph{run start} \cthen{} \textit{run}\texttt{++}\\
    \cif{} $\textit{occ} = \textit{run}$ \cand{} \crunvar{} $\geq$ canonical slot \cthen{} \cbreak{}\-\\

  \ccomment{Block I: search remainder within the run}\\
  \cscanr{} \crunvar{}\+\\
    \cif{} $\crunvar{} = \rem{}$\+\\
      \cunlock{} the \cemph{read lock}\\
      \creturn{} contained\-\\
      \cif{} \crunvar{} is not continuation of this run\+\\
        \cunlock{} the \cemph{read lock}\\
        \creturn{} not contained
\end{code}
\end{algorithm}

\paragraph*{Using status bits for local locking.}
\label{sec:conc_status_locking}
To implement our locking scheme we use two combinations of status bits
that are impossible to naturally appear in the table (see
Table~\ref{tab:status}) -- \readlock{} and \writelock{}.  We use
\writelock{} to implement a write lock.  In the beginning of each
write operation, \writelock{} is stored in the first free slot after
the supercluster, see Block~B Algorithm~\ref{alg:insert} (using a
compare and swap operation).  Insertions wait when encountering a
write lock.  This ensures that only one insert operation can be active
per supercluster.

The combination \readlock{} is used as a read lock. All operations
(both insertions Block~C Algorithm~\ref{alg:insert} and queries
Block~G Algorithm~\ref{alg:contains}) replace the status bits of the
first element of their canonical cluster with \readlock{}.
Additionally, inserting threads wait for each encountered read lock
while moving elements in the table, see Block~E
Algorithm~\ref{alg:insert}.  When moving elements during the insertion
each encountered cluster start is shifted and becomes part of the
canonical cluster that is protected by the insertion's original read
lock.  This ensures, that no operation can change the contents
of a cluster while it is protected with a read lock.

\paragraph*{Avoiding locks.}
\label{sec:conc_avoiding_locks}
Many instances of locking can be avoided, e.g., when the canonical
slot for an insertion is empty (write the elements with a single
compare and swap operation), when the canonical slot of a query
either has no run (first status bit), or stores the wanted fingerprint.
In addition to these trivial instances of lock elision, where the
whole operation happens in one slot, we can also profit from our
compressed atomic storage scheme (see Section~\ref{sec:conc_compact}).
Since we store multiple slots together in one atomic data member,
multiple slots can be changed at once.  Each operation can act without
acquiring a lock, if the whole operation can be completed without
loading another data element.  The correctness of the algorithm is
still guaranteed, because the slots within one data element are always
in a consistent state.

Additionally, since the table is the same as it would be for the basic
non-concurrent quotient filter, queries can be executed without
locking if there can be no concurrent write operations (e.g. during an
algorithm that first constructs the table and then queries it
repeatedly).  This fact is actually used for the fully expandable variant
introduced in Section~\ref{sec:dyn_quotient}.


\paragraph*{Deletions.}
\label{sec:conc_del}
Given our locking scheme deletions could be implemented in the
following way (we did not do this since it has some impact on
insertions).  A thread executing a deletion first has to acquire a
write lock on the supercluster then it has to scan to the left (to
the canonical slot) to guarantee that there was no element removed
while finding the cluster end.  Then the read lock is acquired and the
deletion is executed similar to the sequential case.  During the
deletion it is possible that new cluster starts are created due to
elements shifting to the left into their canonical slot. Whenever this
happens, the cluster is created with read locked status bits
(\readlock{}).  After the deletion all locks are unlocked.

This implementation would impact some of the other operations.  During
a query, after acquiring the read lock when moving to the right, it
would be possible to hit a new cluster start or an empty cell before
reaching the canonical slot.  In this case the previous read lock is
unlocked and the operation is restarted.  During an insertion after
acquiring the write lock it is necessary to scan to the left to check
that no element was removed after the canonical slot.

\paragraph*{2-Status-Bit Concurrent Quotient Filter.}
In the 2-status-bit variant of the quotient filter there are no unused
status bit combinations that can be used as read or write locks.  But
zero can never be a remainder when using this variant, therefore, a
slot with an empty remainder but non-zero status bits cannot occur.
We can use such a cell to represent a lock.  Using this method,
cluster starts within a larger supercluster cannot be recognized when
scanning to the left ,i.e., in Blocks C and G (of
Algorithms~\ref{alg:insert} and \ref{alg:contains}).  Instead we can
only recognize supercluster starts (a non-empty cell to the right of
an empty cell).

To write lock a supercluster, we store \texttt{01} in the status bits
of an otherwise empty slot after the supercluster.  To read lock a
supercluster
we remove the remainder from the table and store it locally until the
lock is released (status bits \texttt{11}).  However, this concurrent
variant is not practical because we have to ensure that the
read-locked slot is still a supercluster start (the slot to its left
remains empty).  To do this we can atomically compare and swap both
slots.  However this is a problem since both slots might be stored in
different atomic data types or even cache lines.  The variant is still
interesting from a theoretical perspective where a compare and swap
operation changing two neighboring slots is completely reasonable
(usually below 64 bits).  However, we have implemented this variant
when we did some preliminary testing with non-aligned compare and swap
operations and it is non-competitive with the other implementations.

\paragraph*{Growing concurrently.}
\label{sec:conc_growing}
The bounded growing technique (described in
Section~\ref{sec:prelim_quotient}) can be used to increase the
capacity of a concurrent quotient filter similar to that of a
sequential quotient filter.  In the concurrent setting we have to
consider two things: distributing the work of the migration between
threads and ensuring that no new elements are inserted into parts of
the old table that were already migrated (otherwise they might be
lost).

To distribute the work of migrating elements, we use methods similar
to those in Maier et al.~\cite{growt_topc}.  After the migration is
triggered, every thread that starts an operation first helps with the
migration before executing the operation on the new table.  Reducing
interactions between threads during the migration is important for
performance.  Therefore, we migrate the table in blocks.  Every thread
acquires a block by incrementing a shared atomic variable.  Each block
has a constant size of 4096 slots.  The migration of each block
happens one supercluster at a time.  Each thread migrates all
superclusters that begin in its block.  This means that a thread does
not migrate the first supercluster in its block, if it starts in the
previous block.  It also means that the thread migrates elements from
the next block, if its last supercluster crosses the border to that
block.  The order of elements does not change during the migration,
because they remain ordered by their fingerprint.  In general this
means that most elements within one block of the original table are
moved into one of two blocks in the target table (block $i$ is moved
to $2i$ and $2i+1$).
By assigning clusters depending on the starting slot of their
supercluster, we enforce that there are no two threads accessing the
same slot of the target table.  Hence, no atomic operations or locks
are necessary in the target table.

As described before, we have to ensure that ongoing insert operations
either finish correctly or help with the migration before inserting
into the new table. Ongoing queries also have to finish to prevent
deadlocks.
To prevent other threads from inserting elements during the migration,
we write lock each empty slot and each supercluster before it is
migrated.  These \emph{migration-write-locks} are never released.  To
differentiate migration-write-locks from the ones used in a normal
insertions, we write lock a slot and store a non-zero remainder (write
locks are usually only stored in empty slots).  This way, an ongoing
insertion recognizes that the encountered write lock belongs to a
migration.  The inserting thread first helps with the migration before
restarting the insertion after the table is fully migrated.  Queries
can happen concurrently with the migration, because the migration does
not need read locks.

\section{Fully Expandable QFs}
\label{sec:dyn_quotient}
The goal of this fully expandable quotient filter is to offer a resizable
quotient filter variant with a bounded false positive rate that works
well even if there is no known bound to the number of elements
inserted.  Adding new fingerprint bits to existing entries is
impossible without access to the inserted elements.  We adapt a
technique that was originally introduced for scalable Bloom
filters~\cite{scalable_bloom}.  Once a quotient filter is filled, we
allocate a new \emph{additional} quotient filter.  Each subsequent
quotient filter increases the fingerprint size.  Overall, this ensures
a bounded false positive rate.

We show that even though this is an old idea, it offers some
interesting possibilities when applied to quotient filters, i.e.,
avoiding locks on lower levels, growing each level using the bounded
growing technique, higher fill degree through cascading inserts, and
early rejection of queries also through cascading inserts.

\subsection{Architecture}
\label{sec:dyn_arch}
This new data structure starts out with one quotient filter, but over
time, it may contain a set of quotient filters we call levels.  At any
point in time, only the newest (highest) level is active.  Insertions
operate on the active level.  The data structure is initialized with
two user-defined parameters the \emph{expected capacity} $c$ and the
upper \emph{bound for the false positive rate} $\fpratedach{}$.  The
first level table is initialized with $m_0$ slots where
$m_0 = 2^{q_0}$ is the first power of 2 where
$\filldeg{}_{\textit{grow}}\cdot m_0$ is larger than $c$, here
$\filldeg{}_{\textit{grow}}$ is the fill ratio where growing is
triggered and $\maxline{n_i} = \filldeg{}_{\textit{grow}}\cdot m_i$ is
the maximum number of elements level $i$ can hold.  The number of
remainder bits $r_0$ is chosen such that
$\fpratedach{} > 2\filldeg{}_{\textit{grow}}\cdot 2^{-r_0}$.  This
first table uses $k_0=q_0+r_0$ bits for its fingerprint.

Queries have to check each level.  Within the lower levels queries do
not need any locks, because the elements there are finalized.  The query
performance depends on the number of levels.  To keep the number of
levels small, we have to increase the capacity of each subsequent
level. To also bound the false positive rate, we have to reduce the
false positive rate of each subsequent level.  To achieve both of
these goals, we increase the size of the fingerprint $k_i$ by two for
each subsequent level ($k_i = 2+k_{i-1}$).  Using the longer
fingerprint we can ensure that once the new table holds twice as many
elements as the old one ($\maxline{n_i} = \maxline{n_{i+1}}/2$), it
still has half the false positive rate
($\fprateidach{i} = \maxline{n_i}\cdot 2^{-k_i} = 2 \fprateidach{i+1}
= 2\cdot \maxline{n_{i+1}} \cdot 2^{-k_{i+1}}$).

When one level reaches its maximum capacity $\maxline{n_i}$ we
allocate a new level.  Instead of allocating the new level to
immediately have twice the number of slots as the old level, we
allocate it with one 8th of the final size, and use the bounded
growing algorithm (described in Section~\ref{sec:prelim_quotient}) to grow it
to its final size (three growing steps).  This way, the table has a
higher average fill rate (at least
$2/3\cdot\filldeg{}_{\textit{grow}}$ instead of
$1/3\cdot\filldeg{}_{\textit{grow}}$ \todolater{for future reference
  4th of the size would have been enough}).



\begin{theorem}[Bounded $\fprate$ in expandable QF]
  \label{theo:dyn_fprate}
  The fully expandable quotient filter holds the false positive
  probability $\fpratedach{}$ set by the user independently
  of the number of inserted elements.
\end{theorem}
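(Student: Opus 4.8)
The plan is to reduce the statement to a union bound over the levels plus a geometric-series estimate. Fix an arbitrary moment in the execution and let $L$ be the index of the currently active level, so the structure consists of levels $0,\dots,L$. For a non-inserted element $x$, the query answers \texttt{true} precisely when $x$'s fingerprint collides with a stored fingerprint on at least one of these levels; writing $A_i$ for the event "level $i$ reports a (false) positive", the overall false positive probability is $\Pr\!\left[\bigcup_{i=0}^{L} A_i\right]\le \sum_{i=0}^{L}\Pr[A_i]$. Note this step needs no independence between the levels' hash functions — it is the plain union bound — so it survives the fact that the levels are hashed independently.

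Next I would bound each $\Pr[A_i]$ by the \emph{maximal} false positive rate that level $i$ can ever attain. A new level is opened as soon as level $i$ reaches $\maxline{n_i}=\filldeg{}_{\textit{grow}}\cdot m_i$ stored fingerprints, so level $i$ always holds at most $\maxline{n_i}$ of them, each of length $k_i$ bits; by the basic bound $n\cdot 2^{-k}$ from Section~\ref{sec:prelim_quotient} this gives $\Pr[A_i]\le \maxline{n_i}\cdot 2^{-k_i}=\fprateidach{i}$ at all times. Here one uses that the internal bounded-growing steps that bring a level to its final size only move one bit from the remainder into the quotient and hence leave $k_i$ — and thus $\fprateidach{i}$ — unchanged, and that cascading inserts can only suppress insertions and therefore only decrease the number of stored fingerprints.

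Then the computation is routine. From $\maxline{n_{i+1}}=2\maxline{n_i}$ and $k_{i+1}=k_i+2$ we get $\maxline{n_i}=2^{i}\maxline{n_0}$ and $2^{-k_i}=4^{-i}2^{-k_0}$, hence $\fprateidach{i}=\maxline{n_i}2^{-k_i}=2^{-i}\fprateidach{0}$. Since $m_0=2^{q_0}$ and $k_0=q_0+r_0$, we have $\fprateidach{0}=\filldeg{}_{\textit{grow}}\,m_0\,2^{-k_0}=\filldeg{}_{\textit{grow}}\cdot 2^{-r_0}$, and therefore
\[
  \sum_{i=0}^{L}\Pr[A_i]\ \le\ \sum_{i=0}^{\infty}\fprateidach{i}
  \ =\ \fprateidach{0}\sum_{i=0}^{\infty}2^{-i}\ =\ 2\fprateidach{0}
  \ =\ 2\filldeg{}_{\textit{grow}}\cdot 2^{-r_0}\ <\ \fpratedach{},
\]
where the final strict inequality is exactly the condition used to pick $r_0$. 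As $L$ was arbitrary, the bound $\fpratedach{}$ holds no matter how many elements have been inserted, which is the claim.

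The part that will take the most care is the uniform-in-time per-level bound: one must argue that during a level's own bounded-growing cascade (where only the quotient/remainder split changes, not $k_i$), under cascading inserts, and under concurrent migrations, level $i$ genuinely never exceeds $\maxline{n_i}$ stored fingerprints of length $k_i$, so that $\fprateidach{i}$ is a valid ceiling throughout. Once that is established, the union bound and the summation of the $\tfrac12$-ratio geometric series are immediate.
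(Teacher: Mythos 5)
Your proof is correct and follows essentially the same route as the paper's own argument: the paper expands the overall false positive probability as $1-\prod_i(1-\fpratei{i})$ and then applies the Weierstrass inequality $\fprate\le\sum_i\fpratei{i}$, which is the same step you carry out as a union bound over the per-level collision events $A_i$; both proofs then use the halving of the per-level false positive ceiling to sum a ratio-$\tfrac12$ geometric series and land on $2\fpratei{1}<\fpratedach{}$ (equivalently your $2\fprateidach{0}=2\filldeg{}_{\textit{grow}}2^{-r_0}<\fpratedach{}$), so the structure is identical. Your write-up is more explicit than the paper's (which has a slight notational slip in writing $\sum_i\fpratei{i}2^{-i}$ where the decay has already been folded into $\fpratei{i}$), and the caveat you flag at the end -- that the per-level bound $\fprateidach{i}$ must hold uniformly through bounded-growing steps, cascading inserts, and concurrent migrations because only the quotient/remainder split, not $k_i$, changes -- is a valid and slightly more careful observation than what the paper states, but it does not constitute a different method.
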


\begin{proof}
  For the following analysis, we assume that fingerprints can
  potentially have an arbitrary length.
  The analysis of the overall false positive rate $\fprate{}$ is very
  similar to that of the scalable Bloom filter.  A false positive
  occurs if one of the $\ell$ levels has a false positive
  $\fprate = 1-\prod_i(1-\fpratei{i})$.  This can be approximated with
  the Weierstrass inequality $\fprate \leq \sum_{i=1}^\ell\fpratei{i}$.
  When we insert the shrinking false positive rates per level
  ($\fpratei{i+1} = \fpratei{i}/2$) we obtain a geometric sum which is
  bounded by $2\fpratei{1}$:
  $\sum_{i=0}^{\ell-1} \fpratei{i} 2^{-i} \leq
  2\fpratei{1}<\fpratedach{}$.
\end{proof}

Using this growing scheme the number of filters is in $O(\log{n/T})$,
therefore, the bounds for queries are similar to those in a broad tree
data structure.  But due to the necessary pointers tree data
structures take significantly more memory.  Additionally, they are
difficult to implement concurrently without creating contention on the
root node.

\subsection{Cascading Inserts}
\label{sec:dyn_quick_insert}
The idea behind cascading inserts is to insert elements on the lowest
possible level.  If the canonical slot in a lower level is empty, we
insert the element into that level.  This can be done using a simple
compare and swap operation (without acquiring a write lock).  Queries
on lower levels can still proceed without locking, because insertions
cannot move existing elements.

The main reason to grow the table before it is full is to improve the
performance by shortening clusters.  The trade-off for this is space
utilization.  For the space utilization it would be optimal to fill
each table 100\%.  Using cascading inserts this can be achieved while
still having a good performance on each level.  Queries on the
lower level tables have no significant slow down due to cascading
inserts, because the average cluster length remains small (cascading
inserts lead to one-element clusters).  Additionally, if we use
cascading inserts, we can abort queries that encounter an empty
canonical slot in one of the lower level tables, because this slot
would have been filled by an insertion.

Cascading inserts do incur some overhead.  Each insertion checks every
level whether its canonical slot is empty.  However, in some
applications checking all levels is already necessary.  For example in
applications like element unification all lower levels are checked to
prevent repeated insertions of one element.  Applications like this
often use combined query and insert operations that only insert if
the element was not yet in the table.  Here cascading inserts do not
cost any overhead.

\section{Experiments}
\label{sec:exp}
The performance evaluations described in this section consist of two
parts: Some tests without growing where we compare our implementation
of concurrent quotient filters to other AMQ data structures, and some
tests with dynamically growing quotient filter implementations.  There
we compare three different variants of growing quotient filters:
bounded growing, our fully expandable multi level approach, and the multi
level approach with cascading inserts.

All experiments were executed on a two socket Intel Xeon E5-2683 v4
machine with 16 cores per socket, each running at 2.1~GHz with 40MB
cache size, and 512~GB main memory.
The tests were compiled using gcc 7.4.0 and the operating system is
Ubuntu 18.04.02.  Each test was repeated 9 times using 3 different
sequences of keys (same distribution) -- 3 runs per sequence.

\paragraph*{Non-Growing Competitors}
\label{sec:exp_nongrow}
\begin{itemize}
\item[\sqrfull] Linear probing quotient filter presented in
  Section~\ref{sec:conc_lp_filter}
\item[\cirfull] Locally locked quotient filter presented in
  Section~\ref{sec:conc_local_locking} (using status bits as locks)
\item[\diafull] Externally locked quotient filter using an array with one lock
  per 4096 slots
\item[\cqfmarker] Counting quotient filter -- implementation by Pandey
  et al.~\cite{counting_qf, counting_qf_git}
\item[\trifull] Bloom filter. Our implementation uses the same amount
  of memory as the quotient filters we compared to ($m\cdot (r+3)$
  bits) and only \doneupdate{4} hash functions ($\Rightarrow$ better
  performance at a false positive rate comparable to our quotient
  filters).
\end{itemize}

\paragraph*{Speedup.}
\begin{figure*}[!htb]
  \centering
  \includegraphics[scale=.68]{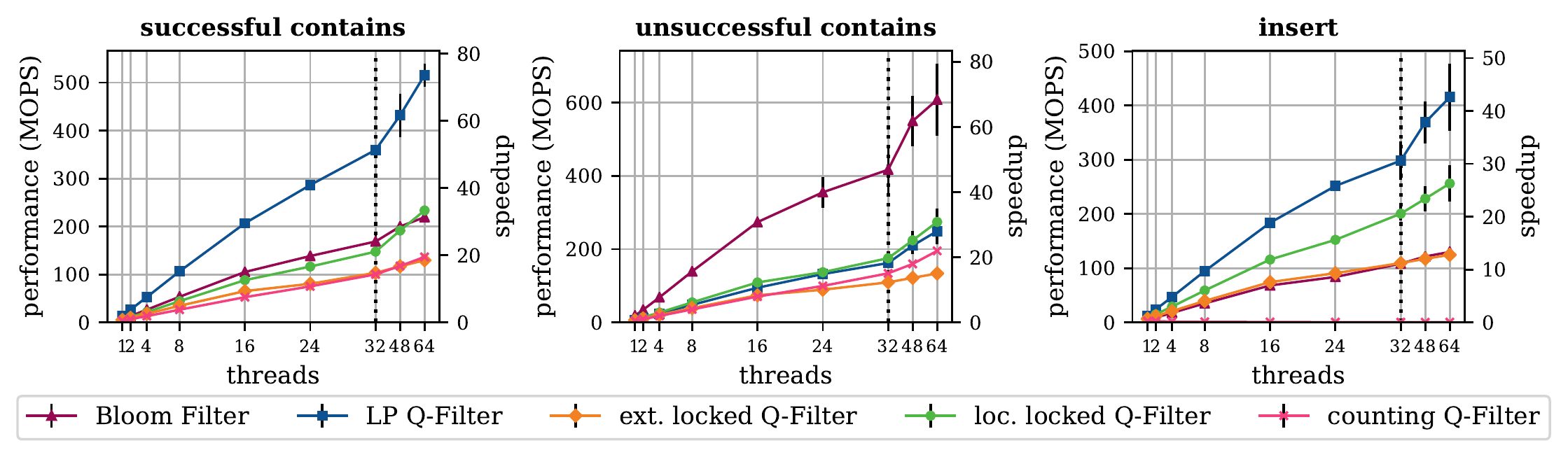}
  \caption{\label{fig:speedup} Throughput over the number of threads.
    Strong scaling measurement inserting 24M elements into a table
    with $2^{25} \approx 33.6\text{M}$ slots ($72\%$ fill degree).
    The speedup is measured compared to a sequential basic quotient
    filter.  \emph{Note:} the x-axis does not scale linearly past 32
    threads.}
\end{figure*}

The first test presented in Figure~\ref{fig:speedup} shows the
throughput of different data structures under a varying number of
operating threads.  We fill the table with 24M uniformly random elements
through repeated insertions by all threads into a previously empty
table of $2^{25}\approx 33.6\text{M}$ slots ($\filldeg \approx 72\%$)
using $r=\doneupdate{10}$ remainder bits (\doneupdate{13} for the linear probing
quotient filter).  Then we execute 24M queries with uniformly random keys
(none of which were previously inserted) this way we measure the
performance of unsuccessful queries and the rate of false positives
(not shown since it is independent of $p$).  At last, we execute one
query for each inserted element. In each of these three phases we
measure the throughput dependent on the number of processors.  The
base line for the speedups was measured using a sequential quotient
filter, executing the same test.

We can see that all data structures scale close to linearly with the
number of processors.  There is only a small bend when the number of
cores exceeds the first socket and once it reaches the number of
physical cores.  But, the data structures even seem to scale
relatively well when using hyperthreading.  The absolute performance
is quite different between the data structures.  In this test the
linear probing quotient filter has by far the best throughput
(excluding the unsuccessful query performance of the Bloom filter).
Compared to the locally locking variant it performs \doneupdate{1.5}
times better on inserts and \doneupdate{2.4} times better on
successful queries (speedups of \doneupdate{30.6 vs. 20.5} on inserts
and \doneupdate{51.3 vs 21.0} on successful queries at $p=32$).  Both
the externally locking variant performs far worse (speedups below
\doneupdate{12} for insertions).  Inserting into the counting quotient
filter does not scale with the number of processors at all, instead,
it starts out at about 2M inserts per second and gets worse from
there.  Both variants also perform worse on queries with a speedup of
below \doneupdate{15} ($p=32$).  This is due to the fact that each
operation incurs multiple cache faults -- one for locking and one for
the table access.  In the case of the counting quotient filter there
is also the additional work necessary for updating the rank-select
bitvectors.  The concurrent Bloom filter has very different
throughputs depending on the operation type.  It has by far the best
throughput on unsuccessful queries (speedup of \doneupdate{46.8} at
$p=32$).  The reason for this is that the described configuration with
only \doneupdate{4} hash functions and more memory than a usual bloom
filters (same memory as quotient filters) leads to a sparse filter,
therefore, it most unsuccessful queries can be aborted after one or
two cache misses.

\paragraph*{Fill ratio.}
\begin{figure*}[!htb]
  \centering
  \includegraphics[scale=.68]{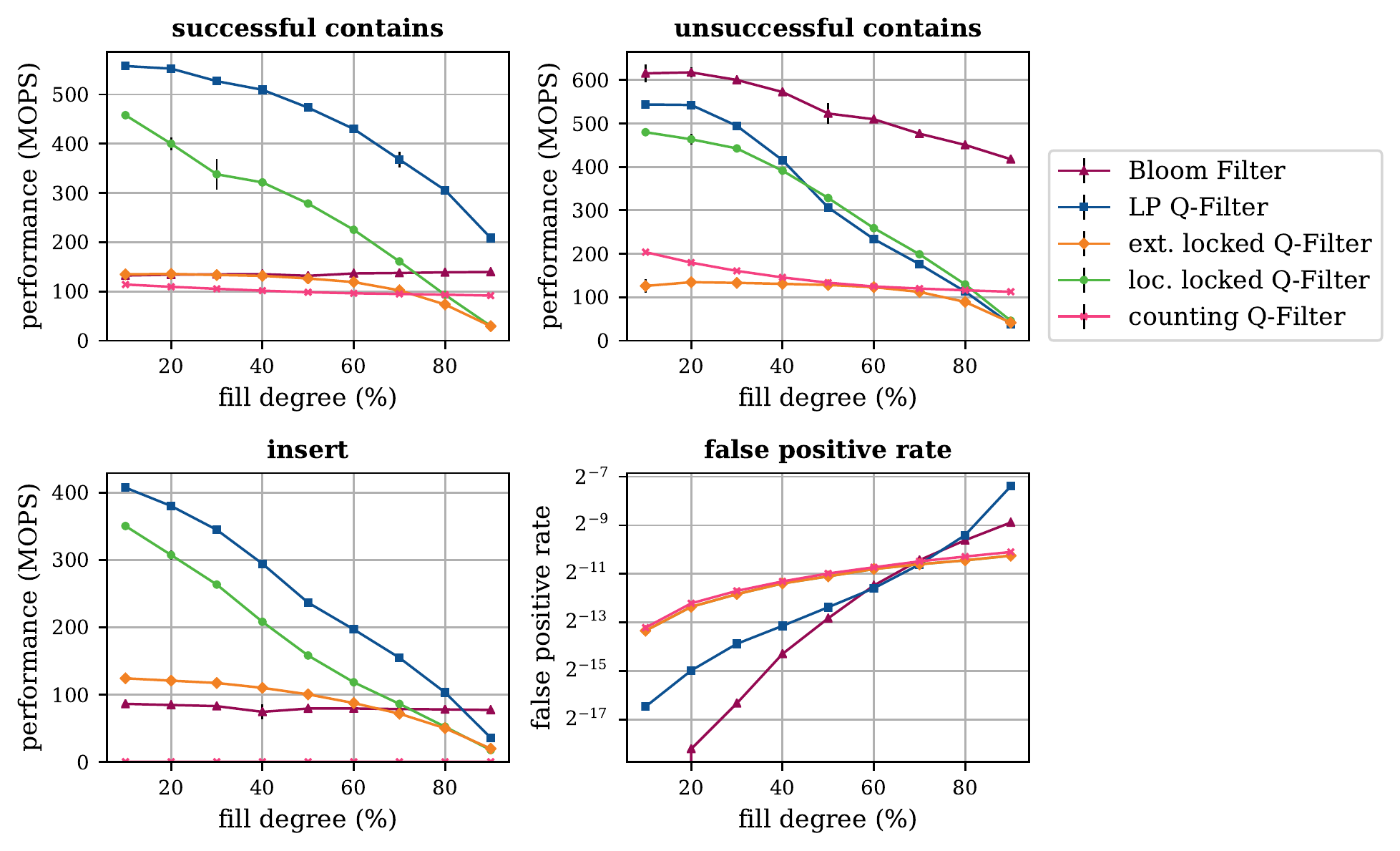}
  \caption{\label{fig:fill_degree} Throughput over fill degree.
    Using $p=32$ threads and $2^{25}\approx 33.6\text{M}$ slots.  Each
    point of the measurement was tested using 100k operations.}
\end{figure*}

In this test we fill a table with $2^{25} \approx 33.6\text{M}$ slots
and $r=\doneupdate{10}$.  Every 10\% of the fill ratio, we execute a
performance test.  During this test each table operation is executed
100k times.  The results can be seen in Figure~\ref{fig:fill_degree}.
The throughput of all quotient filter variants decreases steadily with
the increasing fill ratio, but the relative differences remain close
to the same.  Only the counting quotient filter and the Bloom filter
have running times that are somewhat independent of the fill degree.
On lower fill degrees the linear probing quotient filter and the
locally locked quotient filter display their strengths. With about
\doneupdate{236\% and 157\%} higher insertion throughputs than the
external locking quotient filter, and \doneupdate{358\% and 210\%}
higher successful query throughputs than the Bloom filter at $50\%$
fill degree.

\paragraph*{Growing Benchmark}
\label{sec:exp_grow}

\begin{figure*}[!htb]
  \centering
  \includegraphics[scale=.68]{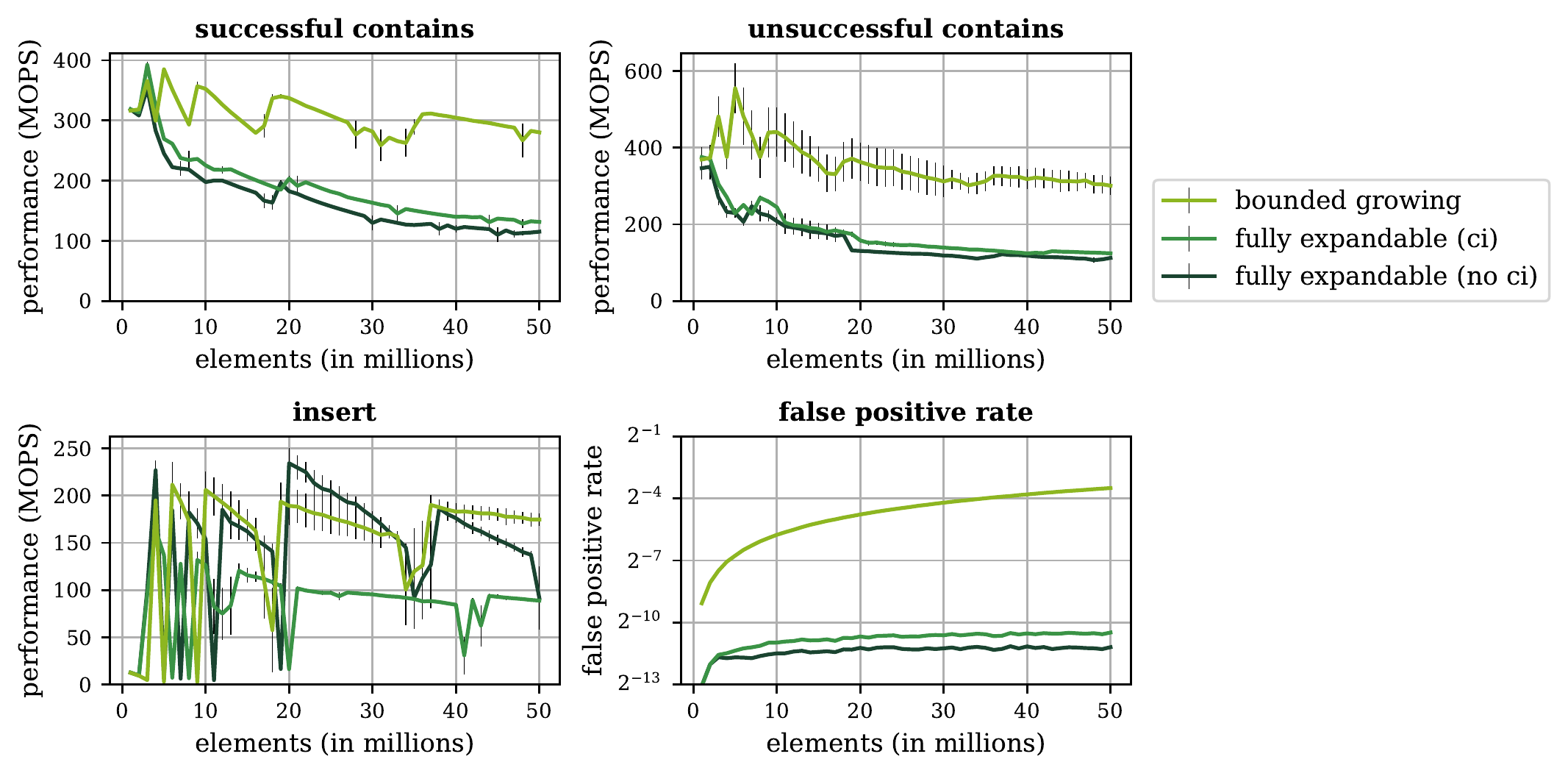}
  \caption{\label{fig:progression} Throughput of the growing tables (all variants based on the local
    locking quotient filter). Using $p=32$ threads 50 segments of 1M elements are inserted
    into a table initialized with $2^{19} \approx 520\text{k}$ slots. }
\end{figure*}
In the benchmark shown in Figure~\ref{fig:progression}, we insert 50M
elements into each of our dynamically sized quotient filters bounded
growing, fully expandable without cascading inserts (no ci), and fully
expandable with cascading inserts (ci) (all based on the quotient filter
using local locking).  The filter was initialized with a capacity of
only $2^{19}\approx 520\text{k}$ slots and a target false positive
rate of $\fpratedach = 2^{-10}$.  The inserted elements are split into 50
segments of 1M elements each.  We measure the running time of
inserting each segment as well as the query performance after each
segment (using 1M successful and unsuccessful queries).

As expected, the false positive rate grows linearly when we only use
bounded growing, but each query still consists of only one table
lookup -- resulting in a query performance similar to that of the
non-growing benchmark.  Both fully expandable variants have a bounded
false positive rate.  But their query performance suffers due to the
lower level look ups and the additional cache lines that have to
be accessed with each additional table.  Cascading inserts can improve
query times by \doneupdate{13\%} for successful queries and
\doneupdate{12\%} for unsuccessful queries (averaged over all
segments), however, slowing down insertions significantly.

\section{Conclusion}
\label{sec:dis}
\paragraph*{Future Work.}
\label{sec:dis_future}
%
%
We already mentioned a way to implement deletions in our
concurrent table Section~\ref{sec:conc_del}.  Additionally, one would
probably want to add the same counting method used by Pandey et
al.~\cite{counting_qf} this should be fairly straight forward, since
the remainders in our implementation are also stored in increasing
order.

One large opportunity for improvement could be to use the local
locking methods presented here as a backoff mechanism for a solution
that uses hardware transactional memory. Transactional memory could
also improve the memory usage by removing the overheads introduced by
the grouped storage method we proposed.  Such an implementation might
use unaligned compare and swap operations, but only if a previous
transaction failed (this would hopefully be very rare).

The fully expandable quotient filter could be adapted to specific use
cases, e.g., one could use the normal insertion algorithm to fill a
fully expandable quotient filter.  Then one could scan all tables (in
parallel) to move elements from later levels into earlier levels, if
their slots are free.  The resulting data structure could then be
queried as if it was built with cascading inserts.



\paragraph*{Discussion.}
In this publication, we have shown a new technique for concurrent
quotient filters that uses the status bits inherent to quotient
filters for localized locking.  Using this technique, no
additional cache lines are accessed (compared to a sequential quotient
filter).  We were able to achieve a \doneupdate{1.8} times increase in
insert performance over the external locking scheme (\doneupdate{1.4}
on queries; both at $p=32$).  Additionally, we proposed a simple
linear probing based filter that does not use any status bits and is
lock-free.  Using the same amount of memory, this filter achieves even
better false positive rates up to a fill degree of 70\% and also
\doneupdate{1.5} times higher insertion speedups (than the local
locking variant).

We also propose to use the bounded growing technique available in
quotient filters to refine the growing technique used in scalable
Bloom filters.  Using this combination of techniques guarantees that
the overall data structure is always at least
$2/3\cdot\filldeg{}_{\textit{grow}}$ filled (where
$\filldeg{}_{\textit{grow}}$ is the fill degree where growing is
triggered). Using cascading inserts, this can even be improved by
filling lower level tables even further, while also improving query
times by over \doneupdate{12\%}.

Our tests show that there is no optimal AMQ data structure.  Which
data structure performs best depends on the use case and the expected
workload.  The linear probing quotient filter is very good, if the
table is not densely filled.  The locally locked quotient filter is
also efficient on tables below a fill degree of 70\%.  But, it is also
more flexible for example when the table starts out empty and is
filled to above 70\% (i.e., constructing the filter).  Our growing
implementations work well if the number of inserted elements is not
known prior to the table's construction.  The counting quotient filter
performs well on query heavy workloads that operate on densely filled
tables.

\FloatBarrier
\bibliography{quotient_paper.bib}{}
\bibliographystyle{plain}
\end{document}